\pdfoutput=1
\documentclass[12pt]{article}%
\usepackage[utf8]{inputenc}%
\usepackage[tracking=true, letterspace=50, expansion=false]{microtype}

\usepackage{amsmath} %
\usepackage{amsthm} %
\usepackage{amssymb} 

\usepackage[margin=1in]{geometry} %
\usepackage{setspace} %
\onehalfspacing

\usepackage[title]{appendix} %
\usepackage{booktabs} %

\usepackage{enumitem}

\newtheorem{lemma}{Lemma}[section]

\theoremstyle{definition}%
\newtheorem{remark}{Remark}

\DeclareMathOperator{\cv}{cv} %
\usepackage{mathtools}
\DeclarePairedDelimiter\abs{\lvert}{\rvert}
\DeclarePairedDelimiter\norm{\lVert}{\rVert}
\DeclarePairedDelimiter\indicatorfence{\{}{\}}
\DeclarePairedDelimiter\hor{[}{)}

\newcommand\1{\operatorname{I}\indicatorfence}


\usepackage{xcolor}%
\definecolor{webbrown}{rgb}{.6,0,0}

\usepackage[nolist]{acronym}
\begin{acronym}
  \acro{CI}{confidence interval}
  \acro{DOB}{date of birth}
  \acro{MSE}{mean squared error}
  \acro{RD}{regression discontinuity}
  \acro{OLS}{ordinary least squares}
  \acro{EHW}{Eicker-Huber-White}
  \acro{ATE}{average treatment effect}
  \acro{RBC}{robust bias correction}
  \acro{ROT}{rule of thumb}
  \acrodefplural{ROT}{rules of thumb}
\end{acronym}
\usepackage{chngcntr} 
\usepackage{etoolbox} 
\AtBeginEnvironment{appendices}{%
  \crefalias{section}{appendix}
  \counterwithin{figure}{section}
  \counterwithin{table}{section}
}

\usepackage[capposition=bottom]{floatrow}
\floatsetup[table]{style=Plaintop}

\usepackage{tikz} 

\usepackage[
 bibencoding=utf8,%
 maxbibnames=3, %
 minbibnames=1, %
 backend=biber,%
 sortlocale=en_US,%
 style=apa,%
 doi=true,%
 url=true,
 eprint=true%
]{biblatex}

\addbibresource{rd_references.bib}

\usepackage{hyperref}%
\hypersetup{%
  breaklinks = true,%
  colorlinks = true,%
  anchorcolor = webbrown,%
  citecolor = webbrown,%
  filecolor = webbrown,%
  linkcolor = webbrown,%
  menucolor = webbrown,%
  urlcolor= webbrown,%
  citebordercolor= 1 0 0,%
  menubordercolor=1 0 0,%
  urlbordercolor=1 0 0,%
  runbordercolor=1 0 0,%
  pdftitle=When Can We Ignore Measurement Error in the Running Variable?,%
  pdfauthor=Yingying Dong \& Michal Kolesár}
\usepackage{cleveref}
\Crefname{equation}{Eq.}{Eqs.}
\crefname{assumption}{assumption}{assumptions}
\crefname{remark}{remark}{remarks}
\crefname{appsec}{Appendix}{Appendices}
\crefname{appsubsec}{Appendix}{Appendices}

\title{When Can We Ignore Measurement Error in the Running Variable?\thanks{We
    thank Tim Armstrong, Xiaohong Chen, and Christoph Rothe, conference participants at the ASSA 2023 Annual Meeting, and seminar participants at the University of Toronto for helpful comments. Kolesár acknowledges support by the Sloan Research Fellowship and by the National Science Foundation Grant SES-22049356. All errors are our own.}}
\author{Yingying Dong\thanks{email: \texttt{yyd@uci.edu}}\\
  University of California Irvine \and
  Michal Kolesár\thanks{email: \texttt{mkolesar@princeton.edu}}\\
  Princeton University}%
\date{\today}

\begin{document}

\maketitle

\begin{abstract}
  In many applications of regression discontinuity designs, the running variable used by the administrator to assign treatment is only observed with error. We show that, provided the observed running variable (i) correctly classifies the treatment assignment, and (ii) affects the conditional means of the potential outcomes smoothly, ignoring the measurement error nonetheless yields an estimate with a causal interpretation: the average treatment effect for units whose observed running variable equals to the cutoff. We show that, possibly after doughnut trimming, these assumptions accommodate a variety of settings where support of the measurement error is not too wide. We propose to conduct inference using bias-aware methods, which remain valid even when discreteness or irregular support in the observed running variable may lead to partial identification. We illustrate the results for both sharp and fuzzy designs in an empirical application.
\end{abstract}

\clearpage

\section{Introduction}

The key characteristic of \ac{RD} designs is that assignment of units to treatment is determined by whether the value of a particular covariate $X^{*}$, called the running variable, exceeds a fixed threshold $c$. Under weak continuity conditions, comparing units on either side of the threshold identifies the \ac{ATE} for units with $X^{*}=c$. However, in many cases, researchers only observe a noisy version $X$ of the true running variable $X^*$. In our survey of papers published in leading economics journals that featured \ac{RD} designs, 23\% of them used a running variable measured with error.\footnote{See \cref{sec:survey} for details.} Most commonly, the noise arose due to rounding or grouping, such as when researchers observe age in years, income reported in income brackets, or job tenure in months, while the administrator assigning treatment uses the exact birthdate, income, or job tenure.

The prevalence of this problem has given rise to a growing literature on measurement error in \ac{RD} designs (see, among others, \cite{HuKl10,dong15,BaLiWa16,PeSh17,DaLeBa17,BaBrDi20}; or
\cite{DiBaBr20}).
As this literature points out, ignoring the measurement error can lead to inconsistent estimates of the usual \ac{RD} estimand, the \ac{ATE} for units with $X^{*}=c$.\footnote{\label{fn:battistin}As we discuss in more detail in \Cref{sec:extension-fuzzy-rd}, an exception is \textcite{bbrw09},
  who, in the context of a fuzzy \ac{RD} design, consider measurement error with a point mass at zero, and give conditions under which ignoring the
  measurement error nonetheless yields a consistent estimate of a fuzzy \ac{RD} analog of this estimand.} This motivated the development of a variety of alternative estimation and inference procedures to recover this \ac{ATE}; the
solutions depend on the particular auxiliary assumptions about the form of the measurement error, or the availability of auxiliary datasets.

This paper makes the simple point that under easily interpretable conditions, existing \ac{RD} techniques provide inference for a slightly different estimand, the \ac{ATE} for units with $X=c$. That is, one could interpret the \ac{RD} analysis as if $X$ were the true running variable. For instance, suppose $X^{*}$ is birthdate, but we only observe the year of birth $X$. While standard \ac{RD} analysis does not yield valid inference for the \ac{RD} estimand associated with $X^*$, the \ac{ATE} for individuals born on the cutoff date, it does provide valid inference for the \ac{ATE} for individuals born in the cutoff year.\footnote{Equivalently, this estimand is a weighted average of \acp{ATE} conditional on $X^{*}$, with weights given by the density of the measurement error conditional on $X=c$. In the example, it corresponds to a weighted average of \acp{ATE} for individuals born on each day of the cutoff year, weighted by the relative birthdate frequencies.}

This result relies on two key conditions: (i) using $X$ as a running variable correctly classifies the treatment assignment, and (ii) the conditional means of the potential outcomes are smooth in $X$. The first condition holds automatically for certain types of rounding error; more generally it requires removing observations at or in the immediate vicinity of the threshold, resulting in a ``doughnut design''. Our approach is thus most relevant when the support of the measurement error is relatively narrow (which includes most settings with rounding or grouping error); otherwise the doughnut trimming may end up removing too many observations and preclude informative inference.

For the second condition, we give a formal result showing it holds under \emph{weaker} conditions than those needed for inference on the \ac{ATE} conditional on $X^*=c$, if we were to observe $X^*$: intuitively, the measurement error smooths out kinks or other irregularities in the conditional mean of the outcome given $X^*$. Inference can be conducted using bias-aware inference methods \parencite[e.g.][]{ArKo18optimal,ArKo20,NoRo21}, which automatically adapt to the potentially irregular support of $X$. In particular, we can ignore the measurement error in the sense that estimation and inference on the trimmed data can proceed as if $X$ were the running variable used by the administrator to assign treatment.

An appealing feature of focusing on the \ac{ATE} for units with $X=c$ is that valid inference relies only on assumptions about the smoothness of the conditional mean of the potential outcomes given $%
X$. Such assumptions are easy to interpret, and are (partially) testable. Furthermore, inference is standard in that we can directly apply existing methods. In contrast, for inference on the average treatment effect for units with $X^{*}=c$, one needs to either make specific assumptions about latent objects, such as the distribution of the measurement error given $X$ or $X^{*}$, or have access to auxiliary data; furthermore, the form of the estimator depends on the exact form of these specific assumptions.

The rest of the paper proceeds as follows. \Cref{sec:setup-main-result} gives the setup and the main results, and discusses their applicability under different types of measurement error. \Cref{sec:empir-appl-making} illustrates the results in an empirical application. \Cref{sec:conclusion} concludes. Auxiliary results appear in the appendix.

\section{Setup and main results}\label{sec:setup-main-result}

We are interested in the effect of a treatment $T$ on an outcome $Y$. Let $Y(t)$ denote the potential outcomes, $t\in\{0,1\}$. The observed outcome is given by $Y=Y(0)+T(Y(1)-Y(0))$. An administrator assigns individuals to treatment if their running variable $X^{*}$ crosses a threshold $c$, which we normalize to $c=0$. Let $Z=\1{X^{*}\geq 0}$ denote the indicator for treatment assignment. In a sharp \ac{RD} design, all individuals comply with the treatment assignment, so that $T=Z$. Let $g^{*}_{t}(x):=E[Y(t)\mid X^{*}=x]$ denote the conditional mean of the potential outcomes given $X^{*}$, and let $g^{*}(x):=E[Y\mid X^{*}=x]=\1{x\geq 0}g^{*}_{1}(x)+\1{x <0}g^{*}_{0}(x)$ denote the conditional mean of the observed outcome.

If the conditional means $g^{*}_{t}$ are continuous at $0$, the jump in $g^{*}$ at $0$ identifies an average treatment effect for units at the threshold \parencite{htv01},
\begin{equation} \label{eq:taustar_srd}
\tau^{*}:=E[Y(1)-Y(0)\mid X^{*}=0]=\lim_{x\downarrow 0}g^{*}(x)-
\lim_{x\uparrow 0}g^{*}(x).
\end{equation}
If $X^{*}$ were observed, and we strengthen the continuity assumption by
placing nonparametric smoothness assumptions on $g^{*}$, several methods for
estimation and inference on $\tau^{*}$ are available
\parencite[see,
e.g.,][]{ImKa12,CaCaTi14,ArKo18optimal,ArKo20,ImWa19}. Under parametric
restrictions on $g^{*}$, we can leverage standard parametric regression
methods for estimation and inference.

We do not observe $X^{*}$ directly, however; instead we observe $X=X^{*}-e$,
where $e$ is measurement error.\footnote{Our setting is distinct from that in
  \textcite{eckles2020}, where the running variable $X$ observed by the
  researcher is also the variable used by the administrator, and it can be
  thought of as a noisy measure of some latent variable $X^{*}$ that affects
  potential outcomes.} As our leading example, we focus on the case where $e$
represents rounding or grouping error. By \emph{rounding}, we mean that the observed running variable can be written as $X=r(X^{*})$, where the rounding function $r$ is a monotone step function that is idempotent (rounding a number twice is the same as rounding it once, $r(r(X^{*}))=r(X^{*})$), and the steps
are equal-sized. For instance, if we observe a rounded-down version of $X^{*}$, then the rounding function corresponds to the floor function, $r(X^{*})=\lfloor X^{*}\rfloor$, and $e=X^{*}-\lfloor X^{*}\rfloor$.
Rounding is a special case of \emph{grouping}. Grouping allows $r(\cdot)$ to be any idempotent step function, not necessarily with equal-sized steps; it arises when the sample space for $X^{*}$ is partitioned into subsets, and $X=r(X^*)$ is a numeric value representing the subset \parencite{HeRu91}. For instance, under interval reporting, $X$ may correspond to the midpoint or one of the endpoints of the interval that $X^{*}$ belongs to. In spatial \ac{RD}, it is common to observe the centroid of the unit's ZIP code or county instead of its exact location.

Our framework, outlined in \Cref{sec:proposed-approach} for sharp \ac{RD}, and extended to fuzzy \ac{RD} settings in \Cref{sec:extension-fuzzy-rd}, only imposes high-level conditions on $e$ that allow the measurement error to take
many other forms besides grouping; $X$ may be discrete, continuous, or mixed.
\Cref{sec:cond-valid-prop} discusses sufficient low-level conditions in particular settings.
In addition to grouping, we consider classical measurement
error ($e$ is independent of $Y(1), Y(0)$ and $X^{*}$), Berkson measurement error ($e$ is independent of $Y(1), Y(0)$ and $X$), as well as \emph{heaping}.
While under grouping, the grouping function $r(X)$ is the same for each unit (so that the conditional distribution of $X$ given $X^{*}$ is degenerate), under heaping one may round the data according one of several rounding functions $r_{u}(\cdot)$, $u=1,2,\dotsc, K$, with different ranges; we observe $X=r_{U}(X^*)$, where $U$ is a random variable, possibly correlated with potential outcomes or $X^*$, determining which rounding function is used \parencite{HeRu91}. For instance, some parents may report ages of their children rounded to the nearest year or nearest half-year, while others may not round at all and report the exact date of birth \parencite[e.g.][]{HeRu90}.
The degree of rounding may depend on characteristics of the parents or the age of the child (ages of very young children and children of parents with more education are usually more likely to be precisely reported).
The density of the running variable then displays ``heaps'' at ages that are multiples of $0.5$.

\subsection{Estimation and inference with measurement error}\label{sec:proposed-approach}

Our approach is based on the observation that any variable $X$ can serve as a running variable, provided that it correctly classifies the treatment assignment, and provided that the conditional mean functions of the potential outcomes given $X$ are continuous at the cutoff:
\begin{enumerate}[label = (\emph{C\arabic*})]
\item\label{item:c1} $\1{X\geq 0}=Z$ almost surely.
\item\label{item:c2} $g_{t}(x):=E[Y(t)\mid X=x]$ is continuous at $0$ for $t=0, 1$.
\end{enumerate}
We give a detailed discussion of these conditions in our setting, where $X$ is a
mismeasured version of $X^{*}$, in \Cref{sec:cond-valid-prop} below.
Condition~\ref{item:c2} corresponds to the standard \ac{RD} continuity condition
from \textcite{htv01}, except it is applied to $X$ rather than $X^{*}$.

To interpret the estimand we consider, let $g^*_{t}(X^*, e):=E[Y(t)\mid X^*, e]$ denote the conditional mean of the potential outcome $Y(t)$, $t=0,1$, given \emph{both} the true value of the running variable and the measurement error, and let $\tau^*(X^{*}, e)=g^*_{1}(X^{*}, e)-g^*_{0}(X^{*}, e)=E[Y(1)-Y(0) \mid X^{*}, e]$ denote the conditional \ac{ATE}, conditional on both $X^*$ and $e$.
We denote the conditional \ac{ATE}, conditional on $X^*$ only, by $\tau^*(x)=g^*_{1}(x)-g^*_{0}(x)=E[Y(1)-Y(0) \mid X^{*}=x]$, so that $\tau^*=\tau^*(0)$.
While, as we discuss further below, our approach does allow the measurement error to affect potential outcomes, to clearly link the estimand we consider to the usual \ac{RD} estimand, it is useful to rule this possibility out and assume that the measurement error doesn't affect the conditional \ac{ATE}, at least when $X=0$ (note $E[Y(1)-Y(0) \mid X^{*}=x, X=0]=\tau^{*}(x, x)$):
\begin{enumerate}[label = (\emph{C\arabic*})]\setcounter{enumi}{2}
\item\label{item:c3} $\tau^*(x, x)=\tau^*(x)$.
\end{enumerate}
This condition is slightly weaker than the requirement that the measurement
error be non-differential, i.e., independent of $(Y(1), Y(0))$ given
$X^{*}$ \parencite[Chapter 2.6]{CaRuSt06}; for instance, rounding, grouping,
and classical measurement errors are all non-differential.
\begin{lemma}\label{theorem:sharp-designs}
  Suppose that $T=Z$ and conditions~\ref{item:c1} and \ref{item:c2} hold.
  Then the jump in the conditional mean function $g(x):=E[Y\mid X=x]$ at $0$
  identifies the \ac{ATE} for units with $X=0$,
  \begin{equation}\label{eq:tau_definition}
    \tau :=E[Y(1)-Y(0)\mid X=0] =\lim_{x\downarrow 0}g(x)-\lim_{x\uparrow 0}g(x).
  \end{equation}
  If, in addition, condition~\ref{item:c3} holds, then $\tau =\int \tau^{*}(e)dF_{e\mid X}(e\mid 0)$, where $F_{e\mid X}(e\mid x)$ is the conditional distribution of $e=X^{*}-X$ given $X=x$.
\end{lemma}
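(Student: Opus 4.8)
The plan is to prove the two claims in sequence, since the first (identification of $\tau$) is essentially a relabeling of the standard RD argument with $X$ in place of $X^{*}$, and the second (the weighted-average representation) is an application of the law of iterated expectations conditioning on $X^{*}$.

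For the first claim, I would start from condition~\ref{item:c1}, which gives $T = Z = \1{X \geq 0}$ almost surely, so on the event $\{X = x\}$ we have $T = \1{x \geq 0}$ and hence $Y = Y(0) + \1{x \geq 0}(Y(1) - Y(0))$. Taking conditional expectations given $X = x$ yields $g(x) = E[Y(0) \mid X = x] + \1{x \geq 0}\,(g_{1}(x) - g_{0}(x))$, i.e.\ $g(x) = \1{x \geq 0} g_{1}(x) + \1{x < 0} g_{0}(x)$ in the notation of~\ref{item:c2}. Then I take one-sided limits: by continuity of $g_{1}$ and $g_{0}$ at $0$ (condition~\ref{item:c2}), $\lim_{x \downarrow 0} g(x) = g_{1}(0) = E[Y(1)\mid X = 0]$ and $\lim_{x \uparrow 0} g(x) = g_{0}(0) = E[Y(0)\mid X = 0]$, so their difference is $E[Y(1) - Y(0)\mid X = 0] = \tau$. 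This is the same two-sided-limit computation as in \eqref{eq:taustar_srd}, just carried out with $X$ rather than $X^{*}$, so no new idea is needed.

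For the second claim, I would write $\tau = E[Y(1) - Y(0)\mid X = 0]$ and insert a conditioning on $X^{*}$ via the law of iterated expectations: $\tau = E\big[\,E[Y(1) - Y(0)\mid X^{*}, X = 0]\mid X = 0\,\big] = \int E[Y(1) - Y(0)\mid X^{*} = e, X = 0]\, dF_{e\mid X}(e\mid 0)$, using that $e = X^{*} - X$ equals $X^{*}$ on the event $\{X = 0\}$, so the conditional law of $X^{*}$ given $X = 0$ is exactly $F_{e\mid X}(\cdot\mid 0)$. Now condition~\ref{item:c3} replaces the inner conditional expectation $E[Y(1) - Y(0)\mid X^{*} = e, X = 0]$ by $E[Y(1) - Y(0)\mid X^{*} = e] = \tau^{*}(e)$, giving $\tau = \int \tau^{*}(e)\, dF_{e\mid X}(e\mid 0)$ as claimed.

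The main thing to be careful about is the handling of conditioning events of probability zero: the identification argument and the iterated-expectation step both condition on $\{X = 0\}$ (and on $\{X^{*} = e\}$), which may be null sets when $X$ or $X^{*}$ is continuous. As is standard in the RD literature, I would treat $g$, $g_t$, and the inner conditional expectation in~\ref{item:c3} as regular conditional expectations / versions of the relevant conditional densities, and interpret the one-sided limits as limits of these versions; the continuity assumptions in~\ref{item:c2} and the implicit regularity in~\ref{item:c3} are exactly what make these manipulations well defined. Beyond this measure-theoretic bookkeeping, every step is a routine application of the definition of conditional expectation, the substitution $Y = Y(0) + T(Y(1)-Y(0))$, and the two stated conditions, so I do not anticipate a substantive obstacle.
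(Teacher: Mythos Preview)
Your proposal is correct and follows essentially the same route as the paper: use condition~\ref{item:c1} to write $g(x)=\1{x\geq 0}g_{1}(x)+\1{x<0}g_{0}(x)$, apply~\ref{item:c2} to evaluate the one-sided limits as $g_{1}(0)-g_{0}(0)=\tau$, and then obtain the second claim by iterated expectations over $X^{*}$ together with~\ref{item:c3}, noting that on $\{X=0\}$ the law of $X^{*}$ coincides with $F_{e\mid X}(\cdot\mid 0)$. The only difference is that you spell out the substitution $Y=Y(0)+T(Y(1)-Y(0))$ and the measure-theoretic caveats more explicitly than the paper does.
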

\begin{proof}
  Under condition~\ref{item:c1} $g(x)=\1{X\geq 0} g_{1}(x)+\1{X <0} g_{0}(x)$, so
  that
  $\lim_{x\downarrow 0}g(x)-\lim_{x\uparrow 0}g(x)=\lim_{x\downarrow
    0}g_{1}(x)-\lim_{x\uparrow 0}g_{0}(x)$, which equals $g_{1}(0)-g_{0}(0)$ by
  condition~\ref{item:c2}. By iterated
  expectations,
  $\tau =E\left[E\left[Y\left(1\right) -Y\left(0\right) \mid X^{*}, X=0\right]
    \mid X=0\right]$, which gives the second claim.
\end{proof}

By \Cref{theorem:sharp-designs}, we can effectively ``ignore'' the measurement
error in the observed running variable $X$, in that we can conduct the analysis
as if $X$ were the running variable used by the administrator to assign
treatment, provided that we align the target of inference accordingly, setting
it to $\tau$. Under condition~\ref{item:c3}, we can interpret $\tau$ as a
weighted average of \acp{ATE} conditional on $X^{*}=x$, $\tau^*(x)$; in contrast $\tau^*$ simply corresponds to the conditional \ac{ATE} at $0$, $\tau^*(0)$. If condition~\ref{item:c3} doesn't hold, then both estimands may be expressed as weighted averages of conditional \acp{ATE}, conditional on both $X^*$ and $e$, but the weighting functions are different, which makes them a little harder to compare. Specifically, by iterated expectations, $\tau =\int \tau^{*}(e, e) dF_{e\mid X}(e\mid 0)$, while $\tau^* =\int \tau^{*}(0, e)dF_{e\mid X^*}(e\mid 0)$, where $F_{e\mid X^*}$ is the conditional distribution of $e$ given $X^*$.

\begin{remark}[Comparison of $\tau$ and $\tau^{*}$]\label{remark:tau_comparison}
  The parameter $\tau^{*}$ corresponds to an \ac{ATE} for units with the latent running variable $X^{*}$ equal to $0$. If $X^{*}$ is the birthdate of an individual, for instance, then $\tau^{*}$ is the \ac{ATE} for those born on the cutoff date. If $X$ is month of birth, then $\tau$ corresponds to the \ac{ATE} for those born in the same month as the individuals born on the cutoff date. Since the measurement error is non-differential, $\tau$ can be expressed as a weighted average of $\tau^{*}(x)$, the \ac{ATE} for individuals born on day $x$, and the same month the individuals born on the cutoff date. The exact weights depend on the distribution of births in that month. If, for instance, birthdate is uniformly distributed within the month, then the weights are uniform. The estimands $\tau $ and $\tau^{*}$ are generally different unless $\tau^{*}(x)$ is constant on the support of $e$. For example, if $\tau^{*}(x)=a+bx$, and $E[e\mid X=0]=1/2$ (say when $e$ is uniform), which appears to be consistent with the results of our empirical application, then $\tau^{*}=a$, while $\tau =a+b/2$. Which parameter is more policy relevant depends on the particular policy counterfactual one has in mind.\footnote{One may object that the parameter $\tau$ is reverse-engineered in that \Cref{theorem:sharp-designs} shows $\tau$ is the parameter that our analysis happens to identify. The same criticism may be leveled at the result of \textcite{htv01} that \ac{RD} analysis in the absence of measurement error identifies $\tau^{*}$. We view both results as useful in separating the internal and external validity of the analysis.}
\end{remark}

For estimation and inference, we need to strengthen condition~\ref{item:c2} by
assuming that $g$ satisfies appropriate parametric or nonparametric smoothness
conditions. As a simple parametric approach, one may assume that $%
g(x)$ takes the form of a polynomial of degree $q$ on either side of the
threshold for values of $x$ within distance $h$ of the threshold. Then one could
estimate $\tau$ by a local polynomial regression of $Y$ onto
$m_{q}(X)=(\1{X\geq 0}, \1{X\geq 0}X, \dotsc, \1{X\geq 0}X^{q}, 1, \dotsc,
X^{q})$ (a polynomial in $X$ interacted with treatment assignment), using
\ac{OLS}.\footnote{While this covers a global approach by setting $h=\infty$, as
  discussed in \textcite{GeIm19}, such an approach may perform poorly relative to
  local approaches.} Specifically, given a sample $\{Y_{i}, X_{i}\}_{i=1}^{n}$,
the estimator is defined as
\begin{equation} \label{eq:local_polynomial}
\hat{\tau}_{h, q}^{Y} =
(1,0,\dotsc,0)^{\prime}\left(\sum_{i=1}^{n} \1{\abs{X_{i}} \leq h} m_{q}(X_{i})m_{q}(X_{i})^{\prime}\right)^{-1}
\sum_{i=1}^{n} \1{\abs{X_{i}}\leq h}m_{q}(X_{i})Y_{i}.
\end{equation}
Under i.i.d.\ sampling, inference can be conducted using \ac{EHW} standard
errors, provided that $X$ has at least $q+1$ support points on either side of
the threshold.

A limitation of the parametric approach is that if $g(x)$ is not exactly polynomial inside the estimation window, the estimator will be biased; consequently, \acp{CI} based on \ac{EHW} standard errors will undercover $\tau$. To address this issue, as the preferred inference method, we propose to use the bias-aware (or ``honest'') inference approach developed in \textcite{ArKo18optimal,ArKo20} and \textcite{KoRo18}. This approach enlarges the \acp{CI} by taking into account the potential finite sample bias of the estimator. In particular, letting $\hat{\sigma}(\hat{\tau}_{h, q}^{Y})$ denote the standard error, a \ac{CI} with level $1-\alpha$ takes the form
\begin{equation} \label{eq:bias_aware_CI} \hat{\tau}_{h, q}^{Y}\pm
  \cv_{\alpha}\left(B(\hat{\tau}_{h, q}^{Y})/ \hat{\sigma}(\hat{\tau}_{h,
      q}^{Y})\right) \cdot \hat{\sigma}(\hat{\tau}_{h, q}^{Y}).
\end{equation}
Here $\cv_{\alpha}(t)$ is the $1-\alpha$ quantile of a folded normal distribution $\abs{N(t,1)}$, and $B(\hat{\tau}_{h, q}^{Y})$ is a bound on the finite-sample conditional (on $X$) bias of the estimator. As a baseline assumption to bound the potential bias, we replace the parametric assumption that $g$ is polynomial with the weaker non-parametric assumption that $g\in \mathcal{F}_{RD}(M)$, where
\begin{equation}\label{eq:holder_smoothness}
  \mathcal{F}_{RD}(M)=\{f_{1}(x)\1{x\geq 0}+f_{0}(x)\1{x <0}\colon
  \norm{f^{\prime}_{0}}_{C^{1}}\leq M, \norm{f^{\prime}_{1}}_{C^{1}}\leq M\}.
\end{equation}
Here $\norm{f}_{C^{1}}=\sup_{x\neq x^{\prime}}\abs{f(x)-f(x')}/\abs{x-x'}$ is
the Lipschitz constant of $f$ (if $f$ is differentiable, then the constant is
the maximum of its derivative, $\norm{f}_{C^{1}}=\sup_{x}\abs{f'(x)}$). The
parameter space for $g$ thus corresponds to (the closure of) a family of functions
that are twice differentiable on either side of the cutoff, with the second
derivative bounded in absolute value by $M$, but are potentially discontinuous
at $0$. Under this assumption, it is optimal to run a local linear regression,
i.e.\ use the estimator $\hat{\tau}_{h, 1}^{Y}$. The (conditional on $X$) bias
of the estimator is maximized at the function
$h(x)=Mx^{2}(\1{x< 0}-\1{x\geq 0})/2$, so that $B(\hat{\tau}_{h, 1}^{Y})$ is
given by~\cref{eq:local_polynomial}, with $Y_{i}$ replaced by $h(X_{i})$. See
\textcite{ArKo20} and \textcite{KoRo18} for details. An appealing feature of the
bias-aware \ac{CI} is that because it accounts for the exact finite-sample bias
of the estimator, it is valid under any bandwidth sequence, including using a
fixed bandwidth; for example, the bandwidth $h$ may be selected to minimize the
(worst-case over $\mathcal{F}_{RD}(M)$) mean squared error, or the length of the
resulting confidence interval.\footnote{\label{fn:smoothness_constant}We implement this method in the
  empirical application in \Cref{sec:empir-appl-making}, where we also discuss
  the choice of $M$, the key tuning parameter. See \textcite{KoRo18},
  \textcite{ImWa19}, and \textcite{ArKo20} for a more detailed discussion,
  including a discussion of implementation issues.}

\subsection{Conditions for validity of proposed approach}\label{sec:cond-valid-prop}

Unlike existing approaches that seek to do inference on $\tau^{*}$ even in
presence of measurement error (e.g.\ \cite{HuKl10,dong15,DaLeBa17}; or
\cite{PeSh17}), we do not impose specific assumptions on the measurement error
distribution or require auxiliary data. Instead, our approach is based on the
observation that we can use existing parametric or nonparametric methods for
inference on $\tau $ provided that condition~\ref{item:c1} holds, and we
strengthen condition~\ref{item:c2} by assuming that the conditional mean
functions $g$ is smooth (in the sense that it is exactly polynomial inside the
estimation window, or else $g\in \mathcal{F}_{RD}(M)$).
These assumptions are high-level in that they are exactly the conditions needed to interpret the \ac{RD} design with the observed running variable as a valid \ac{RD} design.
In the following remarks, we discuss in detail sufficient conditions for these assumptions in specific measurement errors settings.
We also discuss related practical issues.

\begin{remark}[Correct classification of treatment
  assignment]\label{remark:donut}
  In a few special cases, such as when $X$ corresponds to $X^{*}$ rounded down
  to the nearest integer, and the threshold $c$ is an integer,
  condition~\ref{item:c1} holds
  automatically.\footnote{\label{fn:c1_example}Another example when
    condition~\ref{item:c1} holds in the full sample is when only values of $X$ outside the
    immediate vicinity of the cutoff are error-ridden. Specifically, let
    $X^{*}-U$ be an initial noisy measurement, where $U$ is pure measurement
    error with bounded support. We observe a follow-up exact measurement,
    $X=X^{*}$ with probability $p(X^{*}-U)$, and observe the noisy measurement
    otherwise. The probability $p(\cdot)$ varies smoothly and equals $1$ near
    the cutoff, when it is necessary to wait for the follow-up measurement to
    determine treatment assignment.} In general, however, measurement error in
  $X$ may induce misclassification of the treatment assignment for values of $X$
  equal to the cutoff or in its immediate vicinity. In such cases,
  condition~\ref{item:c1} requires dropping observations with such values of
  $X$, resulting in a ``doughnut'' design \parencite[e.g.][]{bglw11,adkw11}.

  The exact form of such doughnut trimming depends on the support of the
  measurement error. Under grouping error, we need to remove observations corresponding to the subset containing $0$. For instance, under ordinary rounding, or rounding up to the nearest integer, we need to remove observations with $X=0$. Under interval measurement, we need to remove $X$ that falls into the interval containing $0$. Under Berkson, classical, or other types of measurement error with bounded support $[s_{0}, s_{1}]$, with $s_{0}\leq 0\leq s_{1}$, we need to remove observations with $X\in \hor{-s_{1}, -s_{0}}$.

  A limitation of our approach is that it does not handle settings in which the
  support $e$ is unknown or unbounded. Furthermore, if the support of $e$ is
  wide, the doughnut trimming may result in removing many observations, and
  preclude informative inference. One way to proceed in such cases is focus on
  inference about $\tau^*$ under parametric assumptions about the measurement
  error distribution and the form of $g^*$, as in
  \textcite{HuKl10}.\footnote{Inference on $\tau^*$ without
    parametric restrictions on $g^*$ is challenging, because $\tau^*$ is
    generally unidentified unless the measurement error distribution is
    completely known. Furthermore, even if the distribution is known, the rates
    of convergence can be very slow. For example, if $e$ is classical Gaussian
    measurement error, the lower bounds in \textcite{FaTr93} suggest that the
    rate is logarithmic in the sample size.} If the researcher observes the
  treatment assignment $Z$, in addition to $Y$ and $X$, we can easily infer
  which units are misclassified. \textcite{PeSh17} discuss how to use this
  information to recover $\tau^*$ without parametric restrictions on $g^*$ or on
  the measurement error distribution.\footnote{\label{fn:daleba}In the context of fuzzy \ac{RD},
    \textcite{DaLeBa17} develop a nonparametric approach to estimation of
    $\tau^{*}$ that is likewise flexible about the form and support of the
    measurement error distribution, but they require the econometrician to
    observe both $X$ and $X^{*}$ in the subsample of treated individuals.}
\end{remark}

\begin{remark}[Smoothness of $g$]\label{remark:smoothness}
  Our proposed approach requires smoothness of $g$ in the sense that $g\in\mathcal{F}_{RD}(M)$. To discuss this condition, suppose that condition~\ref{item:c1} holds, so that we may write $g(x)=\1{x\geq 0}g_1(x)+\1{x <0}g_0(x)$, where, by iterated expectations,
  \begin{equation}\label{eq:gstar_x_e}
    g_{t}(x)=E[Y(t)\mid
    X=x]=E[g_{t}^{*}(X^{*}, e)\mid X=x].
  \end{equation}
  In \Cref{lemma:smoothness} in \Cref{sec:auxiliary-results}, we give a formal
  result showing that if (a) the conditional distribution of the measurement error $F_{e\mid X}(e\mid x)$ is smooth in $x$; and (b) the effect of the measurement error on potential outcomes is smooth, so that $g_{t}^{*}(X^{*}, e)$ is smooth in the second argument, then the condition $g\in\mathcal{F}_{RD}(M)$ is \emph{weaker} than the analogous smoothness requirement needed for inference on $\tau^{*}$ if $X^{*}$ were observed: the smoothness of $g_{t}$ in $X$ is \emph{greater} than the smoothness of $g_{t}(X^{*}, e)$ in $X^{*}$.
  Consequently, \cref{eq:holder_smoothness} will hold for $g$ even in settings where it may not hold for $g^{*}(x)=\1{x\geq 0}g^*_1(x)+\1{x <0}g^*_0(x)$.
  Condition (b) holds trivially for Berkson measurement error, since then $F_{e\mid X}(e\mid x)$ doesn't depend on $x$ at all; it also holds for classical measurement error if the density of $X^*$ is smooth.\footnote{Specifically, by Bayes' rule, the conditional density of $e$ given $X=x$ may be written as
  $f(e;x)=f_{X^*}(x+e)f_e(e)/\int f_{X^*}(x+e)f_e(e)de$, where $f_{X^*}$ and $f_e$ denote the density of $X^*$ and $e$, respectively. This expression is smooth in $x$ so long as the density of $X^*$ is smooth and the marginal density of $X$, $\int f_{X^*}(x+e)f_e(e)de$, is bounded away from zero.}

To gain intuition for this result, suppose first that the measurement error is
non-dif\-fer\-en\-tial, so that $g_{t}^{*}(X^{*}, e)=g_{t}^{*}(X^{*})$, and
condition (b) above holds trivially. Then we may write \cref{eq:gstar_x_e} as
$g_{t}(x)=E[g_{t}^{*}(X^{*})\mid X=x]$. \Cref{lemma:smoothness} then formalizes
the notion that the conditional expectation $E[\cdot\mid X=x]$ ``smooths out''
non-linearities in $g_{t}^{*}$.\footnote{See, for example, \textcite[Section
  4]{newey13aer} for a discussion in the context of nonparametric instrumental
  variables regression, where $X^{*}$ plays the role of an endogenous variable,
  and $X$ plays the role of an instrument. One consequence of this smoothing in
  the current context is that the measurement error may smooth out the
  discontinuity of $g^{*}$ at the cutoff, making $g$ continuous, and causing
  condition~\ref{item:c1} to fail unless we restrict the sample as discussed
  in~\Cref{remark:donut}.} For example, if $g_{t}^{*}$ contains kinks (so that
it has smoothness index $1$ and \cref{eq:holder_smoothness} fails for $g^{*}$),
these kinks will be smoothed out by the measurement error provided that the
conditional density of $e$ given $X$ is continuous with a bounded slope (in
which case $g_{t}$ will have smoothness index $2$, and
\cref{eq:holder_smoothness} will hold for $g$). This smoothing effect is
greatest under grouping error, or more generally whenever $X$ is discrete: under
the interpretation in \Cref{remark:discreteness} below, \cref{eq:holder_smoothness} always holds for appropriately chosen $M$ (so \Cref{lemma:smoothness} is not needed), since we
can always smoothly interpolate $g_{t}(x)$ through the support points of $X$,
using, say, spline interpolation \parencite[e.g.][]{spath95}.

The result in \Cref{lemma:smoothness} also goes through under differential measurement error, provided the error affects the potential outcomes smoothly---this is analogous to the result that $g_{t}^{*}(X^{*})$ remains to be smooth in $X^{*}$ even if agents can manipulate their running variable, so long as the manipulation is not perfect \parencite{lee08}.

  For the parametric approach, if condition~\ref{item:c1} holds, a sufficient
  condition for $g$ to be polynomial of degree $q$ on either side of the
  threshold is that $g_{0}^{*}(X^{*}, e)$ and $g_{1}^{*}(X^{*}, e)$ are
  multivariate polynomials of degree $q$, and $E[e^{j}\mid X]$ for
  $j=1,\dotsc, q$ are polynomials of degree $j$. This follows directly from the
  binomial theorem.
\end{remark}

While conditions (a) and (b) in \Cref{remark:smoothness} are relatively mild, they necessitate subpopulation analysis under heaping. For example, consider using birthweight as a
running variable to identify the effects of hospital care on infant health as in \textcite{adkw10}. Suppose that some (but not all) hospitals report rounded
rather than precise birthweight---then condition (a) fails since the
distribution $F_{e\mid X}(e\mid x)$ changes discontinuously as at round values
of $x$. Furthermore, suppose that hospitals with fewer resources are more likely
report rounded rather than precise birthweight, as argued in
\textcite{BaLiWa16}. Then condition (b) also fails, since $g^{*}_{t}(X^{*}, e)$
is then potentially discontinuous at $e=0$ due to different hospital composition
under rounded vs exact reporting. Suppose, however, that the conditional mean functions $g_t$ are smooth in the subpopulation of hospitals that don't round birthweight. Conditions (a) and (b) then hold if we drop the heaping points, only keeping observations with non-round values of $X$, as suggested by \textcite{BaLiWa16}. The estimand $\tau$ corresponds to a conditional \ac{ATE} for units with $X=X^{*}=0$ born in a hospital that reports exact birthweight.

In some instances of heaping, all values of $X$ are rounded, but the rounding precision differs across units.
Say some individuals round age to the nearest month,
while others round to the nearest year or half-year. To ensure conditions (a) and (b) hold in such cases, we drop individuals with age in months that is a multiple of $6$. The remaining sample then only contains those who report age in months. We can interpret the estimand as the \ac{ATE} for those born in the cutoff month in the subpopulation of individuals who report the running variable with the greatest precision.

Since the conditional mean function $g(x)$ is identified over the support of
$X$, smoothness assumptions such as $g\in \mathcal{F}_{RD}(M)$ are testable.
Problems such as heaping are often apparent from simple plots of undersmoothed
binned averages of the outcome against $X$ \parencite[see, e.g., Figure 1
in][]{bglw11}, and one can also conduct more formal specification tests
\parencite[see, e.g.,][Appendix S.3]{KoRo18}.

\begin{remark}[Irregular support of $X$]\label{remark:discreteness}
  The measurement error may result in a coarsening of the support of the
  observed running variable $X$ relative to $X^{*}$. Under grouping, for example, $X$ becomes discrete even if $X^{*}$ is continuously distributed.
  Furthermore, there may be a gap in the support around $0$ due to doughnut trimming (see \Cref{remark:donut}). In such ``irregular'' cases, since
  conditional mean functions are only well-defined over the support of the
  conditioning variable, following \textcite{KoRo18} and \textcite{ImWa19}, we
  interpret smoothness assumptions such as~\cref{eq:holder_smoothness} to mean
  that there exists a function $g(x)\in\mathcal{F}_{RD}(M)$ with domain
  $\mathbb{R}$ such that $E[Y\mid X]=g(X)$ with probability one. With discrete
  $X$ or under a doughnut design, there will be multiple functions $g$
  satisfying this condition, and the parameter $\tau$ will only be partially
  identified.

  An advantage of bias-aware inference is that the estimator and \ac{CI} construction remains the same whether the support of $X$ is continuous, discrete, or otherwise irregular, and whether $\tau$ is point or partially identified.
  Under irregular support of $X$, the finite-sample bias of the estimator may be large, but the \ac{CI} will automatically reflect it via a larger critical value (in such cases, the interval will converge to the identified set as the sample size $n\to\infty$).
  We illustrate these points in the empirical application in \Cref{sec:empir-appl-making}, where we show that under rounding error, confidence intervals for $\tau$ tend to be longer than confidence intervals for $\tau^{*}$ that one would obtain using the same construction if $X^{*}$ were observed.\footnote{\label{fn:ci_small_sample}In large samples, \acp{CI} for $\tau$ will be wider than the corresponding \acp{CI} for $\tau^{*}$ if $X^{*}$ were observed, because the former don't converge to a point, while the latter do. In finite samples, the variability of the estimators, which in general cannot be ranked, also matters, and the \ac{CI} for $\tau^*$ may end up being wider.}
\end{remark}

\subsection{Fuzzy designs}\label{sec:extension-fuzzy-rd}

In fuzzy \ac{RD} designs, only a subset of the individuals complies with the
treatment assignment, so that $T\neq Z$. In this case, \textcite{htv01} show
that the fuzzy \ac{RD} parameter can be interpreted as a local average
treatment effect for individuals who comply with the treatment assignment.
Let us reconsider their argument when we use a variable $X$ as the running
variable, not necessarily equal to the running variable $X^{*}$ used by the
administrator.\footnote{The original argument in \textcite{htv01} involved defining potential
treatments under counterfactual values of the running variable. However, the running variable may not be manipulable (e.g.\ when $X^{*}$ corresponds to a birthdate). We therefore use a slightly
different argument, based on manipulation of the treatment assignment. The
treatment assignment is typically manipulable, say by moving the cutoff.}

Let $T(1)$ denote the potential treatment status of the individual if they are
assigned to treatment, and let $T(0)$ denote their status if they are not
assigned to treatment. The observed treatment is given by $T=T(Z)$, and the
observed outcome is given by $Y=Y(T(Z))=Y(0)+T(Z)(Y(1)-Y(0))$. Let
$\mathfrak{C}$ denote the event that an individual is a complier, that is
$T(1)>T(0)$. Finally, in analogy to the conditional means $g$ and $g_{t}$, let
$p(x)=E[T\mid X=x]$ and $p_{z}(x)=E[T(z)\mid X=x]$ for $z\in \{0,1\}$.

We replace the sharp \ac{RD} condition that all individuals comply with the
treatment assignment ($T=Z$) with the weaker condition that a non-zero fraction
of individuals complies with it, and that nobody defies the treatment assignment
(in analogy with the monotonicity condition in \cite{ImAn94}):
\begin{enumerate}[label = (\emph{F\arabic*})]
\item\label{item:f1} $P(T(1)\geq T(0)\mid X=0) =1$, and $P(T(1)>T(0)\mid X=0)>0$.
\end{enumerate}
Next, we replace the continuity assumption~\ref{item:c2} with a continuity
assumption on the first stage and reduced form regression functions:\footnote{Analogous to an instrumental variables regression that uses $Z$ as an instrument, these are (non-parametric) regressions of $T$ and $Y$, respectively, onto $Z$ and $X$.}
\begin{enumerate}[label = (\emph{F\arabic*})]\setcounter{enumi}{1}
\item\label{item:f2} $p_{z}(x)$ and
  $E\left[Y\left(T\left(z\right) \right)\mid X=x\right] $, $z=0,1$, are
  continuous at 0.
\end{enumerate}
Intuitively, if treatment eligibility $Z$ did not change at the cutoff but was
instead fixed, this condition implies that the observed outcome $Y=Y(T(Z))$
would be continuous at $0$. As a result, any discontinuity must be due to change
in treatment eligibility, which allows for identification of causal effects.
Conditions~\ref{item:f1} and~\ref{item:f2} are analogous to the standard fuzzy \ac{RD}
assumptions, but applied to $X$ rather than $X^{*}$.

Finally, to link the estimand we consider to the usual \ac{RD} estimand, analogous to condition~\ref{item:c3}, it is useful to assume that the measurement error has no effect on the compliance probability
or the \ac{ATE} for compliers once we control for $X^{*}$.
\begin{enumerate}[label = (\emph{F\arabic*})]\setcounter{enumi}{2}
\item\label{item:f3}
  $P \left(\mathfrak{C}\mid X^{*}=x, X=0\right) =P \left(\mathfrak{C}\mid X^{*}=x\right) $ and
  $E\left[Y\left(1\right) -Y\left(0\right) \mid \mathfrak{C}, X^{*}=x, X=0\right]
  =E\left[Y\left(1\right) -Y\left(0\right) \mid \mathfrak{C}, X^{*}=x%
  \right] $.
\end{enumerate}
This is a slightly weaker requirement that the measurement error $e$ be
non-differential, i.e.\ independent of $(Y(1), Y(0), T(1), T(0))$ given $X^{*}$.
In analogy to condition~\ref{item:c3} in the sharp case, condition~\ref{item:f3}
is helpful for interpreting the estimand, but it is not necessary for validity
of our approach.

With this setup, we obtain a fuzzy \ac{RD} analog of
\Cref{theorem:sharp-designs}.
\begin{lemma}\label{theorem:fuzzy-designs}
  Suppose that conditions~\ref{item:c1}, \ref{item:f1}, and~\ref{item:f2} hold.
  Then
  \begin{equation*}
    \tau_{F} :=E\left[Y\left(1\right)
      -Y\left(0\right) \mid \mathfrak{C}, X=0\right] =\frac{\lim_{x\downarrow 0}g(x)-\lim_{x\uparrow 0}g(x)}{%
      \lim_{x\downarrow 0}p(x)-\lim_{x\uparrow 0}p(x)}.
  \end{equation*}
  If, in addition, condition~\ref{item:f3} holds, then
  $\tau_{F} =\int \tau_{F} ^{\ast}(e)\omega (e)dF_{e\mid X}(e\mid 0)$, where
  $\tau_{F} ^{\ast}(x):=E[Y(1)-Y(0)\mid \mathfrak{C}, X^{*}=x]$, and
  $\omega (e)=\frac{P(\mathfrak{C}\mid X^{*}=e)}{\int P(\mathfrak{C}\mid
    X^{*}=e)dF_{e\mid X}(e\mid 0)}$.
\end{lemma}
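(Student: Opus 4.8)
The plan is to mirror the proof of \Cref{theorem:sharp-designs}, replacing the single regression function $g$ with the pair of first-stage and reduced-form regressions, and then invoking the standard Wald-type argument of \citet{htv01} to pass from intention-to-treat quantities to the complier average effect. First I would observe that, exactly as in \Cref{theorem:sharp-designs}, condition~\ref{item:c1} gives $\1{X\geq 0}=Z$ almost surely, so both $g(x)=E[Y\mid X=x]$ and $p(x)=E[T\mid X=x]$ are one-sided objects: $p(x)=p_{1}(x)\1{x\geq 0}+p_{0}(x)\1{x<0}$, and similarly $E[Y\mid X=x]=E[Y(T(1))\mid X=x]\1{x\geq 0}+E[Y(T(0))\mid X=x]\1{x<0}$ because $Y=Y(T(Z))$ and $Z$ is a deterministic function of $\1{X\geq 0}$. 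By the continuity assumption~\ref{item:f2}, the numerator jump equals $E[Y(T(1))-Y(T(0))\mid X=0]$ and the denominator jump equals $p_{1}(0)-p_{0}(0)=E[T(1)-T(0)\mid X=0]$.

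Next I would rewrite both of these quantities using the complier decomposition. Under monotonicity~\ref{item:f1}, $T(1)-T(0)\in\{0,1\}$ and equals $1$ exactly on $\mathfrak{C}$, so $E[T(1)-T(0)\mid X=0]=P(\mathfrak{C}\mid X=0)$, which is strictly positive by~\ref{item:f1}, so the ratio is well-defined. For the numerator, $Y(T(1))-Y(T(0))=(Y(1)-Y(0))(T(1)-T(0))=(Y(1)-Y(0))\1{\mathfrak{C}}$ pointwise (the always-takers and never-takers contribute zero, and defiers are ruled out), so $E[Y(T(1))-Y(T(0))\mid X=0]=E[Y(1)-Y(0)\mid\mathfrak{C},X=0]\,P(\mathfrak{C}\mid X=0)$. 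Dividing, the $P(\mathfrak{C}\mid X=0)$ factors cancel and we obtain $\tau_{F}=E[Y(1)-Y(0)\mid\mathfrak{C},X=0]$, which is the first claim.

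For the second claim I would condition further on $X^{*}$ and use iterated expectations, as in \Cref{theorem:sharp-designs}, but now being careful about the reweighting induced by the compliance probability. Write $P(\mathfrak{C}\mid X=0)=\int P(\mathfrak{C}\mid X^{*}=e,X=0)\,dF_{e\mid X}(e\mid 0)=\int P(\mathfrak{C}\mid X^{*}=e)\,dF_{e\mid X}(e\mid 0)$ using the first part of~\ref{item:f4}. Similarly, $E[(Y(1)-Y(0))\1{\mathfrak{C}}\mid X=0]=\int E[(Y(1)-Y(0))\1{\mathfrak{C}}\mid X^{*}=e,X=0]\,dF_{e\mid X}(e\mid 0)$, and the integrand factors as $E[Y(1)-Y(0)\mid\mathfrak{C},X^{*}=e,X=0]\,P(\mathfrak{C}\mid X^{*}=e,X=0)$, which by both parts of~\ref{item:f4} equals $\tau_{F}^{*}(e)\,P(\mathfrak{C}\mid X^{*}=e)$. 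Taking the ratio of the two integrals and recognizing $\omega(e)=P(\mathfrak{C}\mid X^{*}=e)/\int P(\mathfrak{C}\mid X^{*}=e)\,dF_{e\mid X}(e\mid 0)$ as the normalized weight gives $\tau_{F}=\int\tau_{F}^{*}(e)\,\omega(e)\,dF_{e\mid X}(e\mid 0)$, as claimed. The main thing to be careful about—more a bookkeeping obstacle than a conceptual one—is keeping the two applications of~\ref{item:f4} straight: the compliance-probability equality is what lets the denominator reweighting go through, while the conditional-ATE equality handles the numerator, and one must verify that the $X=0$ conditioning can be dropped in the integrand before, not after, factoring out the compliance probability. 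I would also note in passing that, as with \Cref{theorem:sharp-designs}, non-differentiality of $e$ (independence of $(Y(1),Y(0),T(1),T(0))$ from $e$ given $X^{*}$) implies~\ref{item:f4}, so the weaker condition stated suffices.
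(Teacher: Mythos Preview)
Your proof is correct and follows essentially the same approach as the paper: for the first claim you use \ref{item:c1} to get the one-sided decomposition of $g$ and $p$, \ref{item:f2} for continuity at $0$, and \ref{item:f1} to reduce $Y(T(1))-Y(T(0))$ and $T(1)-T(0)$ to the complier expressions, exactly as the paper does. For the second claim the paper is terser---it simply says ``apply iterated expectations to the numerator and denominator of $\tau_{F}$ and use \ref{item:f4}''---while you write out the same computation more explicitly by decomposing $E[(Y(1)-Y(0))\1{\mathfrak{C}}\mid X=0]$ and $P(\mathfrak{C}\mid X=0)$ separately via conditioning on $X^{*}$; the arguments are equivalent.
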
%
\begin{proof}
    Observe that
    \begin{equation*}
        \begin{split}
          \lim_{x\downarrow 0}g(x)-\lim_{x\uparrow 0}g(x) & =\lim_{x\downarrow
            0}E[Y(T(1))\mid X=x]-\lim_{x\uparrow 0}E[Y(T(0))\mid X=x] \\
          & =E[Y(T(1))-Y(T(0))\mid X=0]\\
          & =E[(Y(1)-Y(0))(T(1)-T(0))\mid X=0] \\
          & =E[Y(1)-Y(0)\mid X=0,\mathfrak{C}]P(\mathfrak{C}\mid X=0),
        \end{split}%
    \end{equation*}%
    where the first equality uses the fact that $Y=Y(T(Z))$, and that by
    condition~\ref{item:c1}, $T=T(1)$ for individuals with $X\geq 0$, and
    $T=T(0)$ for those with $X<0$, the second equality uses
    condition~\ref{item:f2}, the third uses $Y(T(z))=Y(0)+T(z)(Y(1)-Y(0))$, and
    the last equality uses iterated expectations and
    condition~\ref{item:f1}. By analogous arguments,
    $\lim_{x\downarrow 0}p(x)-\lim_{x\uparrow 0}p(x)=P(\mathfrak{C}\mid
      X=0) $. The second claim follows by applying iterated expectations
    to the numerator and denominator of
    $\tau_{F}=\frac{E\left[(Y(1)-Y(0))\mathfrak{C} \mid X=0\right]}{E[\mathfrak{C}\mid
      X=0]}$, and using condition~\ref{item:f3}.
\end{proof}

Under perfect compliance, $T=Z$, \Cref{theorem:fuzzy-designs} reduces to
\Cref{theorem:sharp-designs}. In analogy to the sharp case, any variable
satisfying conditions~\ref{item:c1}, \ref{item:f1}, and~\ref{item:f2} can be
used as a running variable. With $X=X^{*}$, we obtain the standard result
that
    \begin{equation*}
        \tau_{F}^{\ast}:=\tau_{F} ^{\ast}(0)=\frac{\lim_{x\downarrow
        0}g^{*}(x)-\lim_{x\uparrow 0}g^{*}(x)}{\lim_{x\downarrow 0}p^{\ast
      }(x)-\lim_{x\uparrow 0}p^{\ast}(x)},
    \end{equation*}%
where $p^{\ast}(x)=E[T\mid X^{*}=x]$.

Unless the local average treatment effects $\tau_{F}^{\ast}(x)$ are constant on
the support of $e$, $\tau_{F} \neq \tau_{F}^{\ast}$. Since $\tau_{F}^{\ast}(x)$
is given by the ratio of the reduced form effect
$E[Y(T(1))-Y(T(0))\mid X^{*}=x]$ to the first stage effect
$E[T(1)-T(0)\mid X^{*}=x]$, whether $\tau_{F} ^{\ast}(x)$ is locally
constant depends on heterogeneity in both the reduced form and the first stage
conditional mean functions. Our empirical results in
\Cref{sec:empir-appl-making}, for example, are consistent with the reduced form
effect being approximately constant, while the first stage effect is
approximately linear, i.e., $E[Y(T(1))-Y(T(0))\mid X^{*}=x]\approx a$ and
$E[T(1)-T(0)\mid X^{*}=x]\approx b+cx$; further, the measurement error is approximately uniform on $[0,1] $. So $\tau_{F}\approx a/(b+c/2)$, while
$\tau_{F}^{\ast}\approx a/b$.

If the measurement error is differential, and condition~\ref{item:f3} doesn't
hold, then both $\tau^{*}_{F}$ and $\tau_{F}$ may be expressed as weighted
averages of conditional \acp{ATE} for compliers, conditional on both $X^*$ and
$e$, $\tau_{F} ^{\ast}(X^{*}, e) :=E[Y(1)-Y(0)\mid \mathfrak{C}, X^{*}, e]$.
Specifically, by iterated expectations,
$\tau_{F}= \frac{\int \tau_{F} ^{\ast}(e, e) p_{\mathfrak{C}}(e, e) dF_{e\mid
    X}(e\mid 0)}{\int p_{\mathfrak{C}}(e, e) dF_{e\mid X}(e\mid 0)}$, while
$\tau_{F}^{*}=\frac{\int \tau_{F}^{*}(0, e)p_{\mathfrak{C}}(0,e) dF_{e\mid
    X^{*}}(e\mid 0)}{\int p_{\mathfrak{C}}(0,e) dF_{e\mid X^{*}}(e\mid 0)}$,
where $p_{\mathfrak{C}}(X^{*}, e):=P(\mathfrak{C}\mid X^{*}, e)$.
\Cref{theorem:fuzzy-designs} is related to the result in \textcite{bbrw09} who
show that if we replace condition~\ref{item:c1} with the assumption that the
measurement error is non-differential, and has a point mass at
zero but is otherwise smooth, $\tau_{F}=\tau_{F}^{*}$. If the measurement error is differential, but affects
the potential outcomes and potential treatments smoothly, the arguments in \textcite{bbrw09} imply that
$\tau_{F}=\tau_{F}^{*}(0,0)$; this was shown in
\textcite{clpw15} in the context of fuzzy regression kink designs.

Similarly to the sharp case, if we assume that the conditional mean
functions $g(x)$ and $p(x)$ are polynomial inside a window $h$ of the
threshold, then we can estimate $\tau_{F}$ as a ratio of local polynomial
estimators
\begin{equation}\label{eq:fuzzy_RD_estimator}
\hat{\tau}_{h, q}=\hat{\tau}_{h, q}^{Y}/\hat{\tau}_{h, q}^{T},
\end{equation}
with $\hat{\tau}_{h, q}^{Y}$ defined in \cref{eq:local_polynomial}, and $\hat{\tau}_{h, q}^{T}$ defined analogously.\footnote{Equivalently, as noted in \textcite{htv01}, the estimator can be computed as a two-stage least squares estimator in a regression of $Y$ onto $T$ using $\1{X\geq 0}$ and instrument, and the remaining elements of $m_{q}(X)$ as exogenous covariates, using observations inside the estimation window.} If there are at least $q+1$ support points for $X$ on either side of the threshold and inside the estimation window,
then under i.i.d.\ sampling, standard errors for $\hat{\tau}_{h, q}$ can be
constructed based on the \ac{EHW} covariance matrix for $(\hat{\tau}_{h, q}^{Y}, \hat{\tau}_{h, q}^{T})$ using the delta method.

Our preferred approach weakens the polynomial assumptions on $g(x)$ and $%
p(x) $ by instead assuming that $g\in\mathcal{F}_{RD}(M_{y})$, and $p\in%
\mathcal{F}_{RD}(M_{t})$. While this assumption only delivers set identification if the support of $X$ is irregular (see \Cref{remark:discreteness}), we can use the bias-aware inference approach for constructing \acp{CI} that are asymptotically valid whether $\tau_{F}$ is point identified, set identified, or unidentified.\footnote{The parameter $\tau_{F}$ is unidentified if the instrument $\1{X\geq 0}$ is irrelevant in the sense that $P(T(1)>T(0)\mid X=0)=0$. Since the expression for $\tau_{F}$ in \Cref{theorem:fuzzy-designs} not well-defined in this case, one can define $\tau_{F}$ in an arbitrary way.} In particular, following \textcite{NoRo21}, we can test the hypothesis $H_{0}\colon\tau_{F}=\tau_{F, 0}$ by checking whether $0$ is in the bias-aware confidence interval based on $\hat{\tau}_{h,1}^{Y-\tau_{F, 0}T}$, and noting that the smoothness assumptions on $g$ and $p$ imply $%
E[Y-\tau_{F, 0}T\mid X=x]\in\mathcal{F}_{RD}(M_{y}+\abs{\tau_{F, 0}}M_{t})$. The confidence set for $\tau_{F}$ is constructed by collecting all values of $\tau_{F, 0}$ that are not rejected, similar to the construction of \textcite{AnRu49} confidence set in standard linear instrumental variables model.\footnote{\label{fn:smoothness_constant_fuzzy}We implement this method in our empirical application in \Cref{sec:empir-appl-making}, where we also discuss the choice of the smoothness constants $M_{t}$ and $M_{y}$. See \textcite{NoRo21} for a detailed discussion of implementation issues.}

\section{Empirical Application}\label{sec:empir-appl-making}

In this section, we use data from \textcite{HoHi16} to estimate the impact of
preregistration on youth turnout in an election. \textcite{HoHi16} leverage the
fact that in Florida, individuals who were ineligible to vote in the 2008
election (those born after November 4, 1990) were nonetheless eligible to preregister
to be added to the voter rolls for the next election. Those born before November 4,
1990 were already eligible to register regularly and vote in 2008. This
motivates a fuzzy \ac{RD} design, where the treatment $T$ is an indicator for
preregistering, the outcome $Y$ is an indicator for voting in the 2012 election, and the running variable $X^{*}$ is the proximity to the eligibility cutoff in days.

To illustrate the effects of measurement error in the running variable, we
compare this design to a fuzzy \ac{RD} design in which we (pretend to) only
observe individuals' month of birth, and hence use proximity to
November 1990 in months, $X$, as a running variable. We discard individuals born in November 1990, since their eligibility cannot be determined by month of birth alone (see \Cref{remark:donut}).
We show that, consistent with the discussion in \Cref{remark:tau_comparison,remark:discreteness}, (i) using proximity in days vs months yields different estimates, reflecting the impact of the rounding error on the estimand, and (ii) using month of birth generally leads to wider
\acp{CI}.

\begin{figure}[t]
\centering
\input{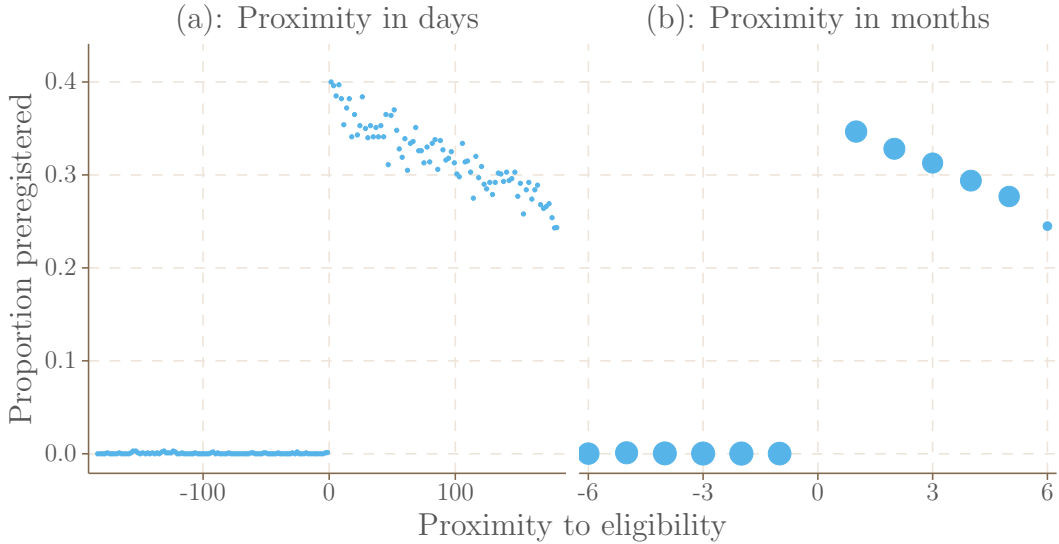}
\caption{Effect of proximity on preregistering.}\label{fig:first_stage_plot}
\floatfoot{\emph{Notes:} In panel (a), proximity is measured in days, and each point
  corresponds to an average of 1,000 individuals. In panel (b), proximity is
  measured in months, and each point corresponds to an average across all
  individuals born in a given month.}
\end{figure}

We first visualize both versions of the \ac{RD} design. In each case, the
sample size is 186,575, consisting of individuals born within 6 months of the
eligibility cutoff. \Cref{fig:first_stage_plot} presents the first stage,
plotting preregistration rate against proximity in days (panel (a)) or in months
(panel (b)). For ineligible individuals, the preregistration rate is essentially
0, while for eligible individuals, the preregistration rate is downward
slopping: those born further away from the cutoff preregister with lower
probability. There is a clear jump in the registration rate at the eligibility
threshold in either panel. \Cref{fig:reduced_form_plot} shows the reduced form, plotting the proportion who voted in the 2012 election against proximity to eligibility. In both panels, there is a small jump in the voting probability at the cutoff.

\begin{figure}[t]
\centering
\input{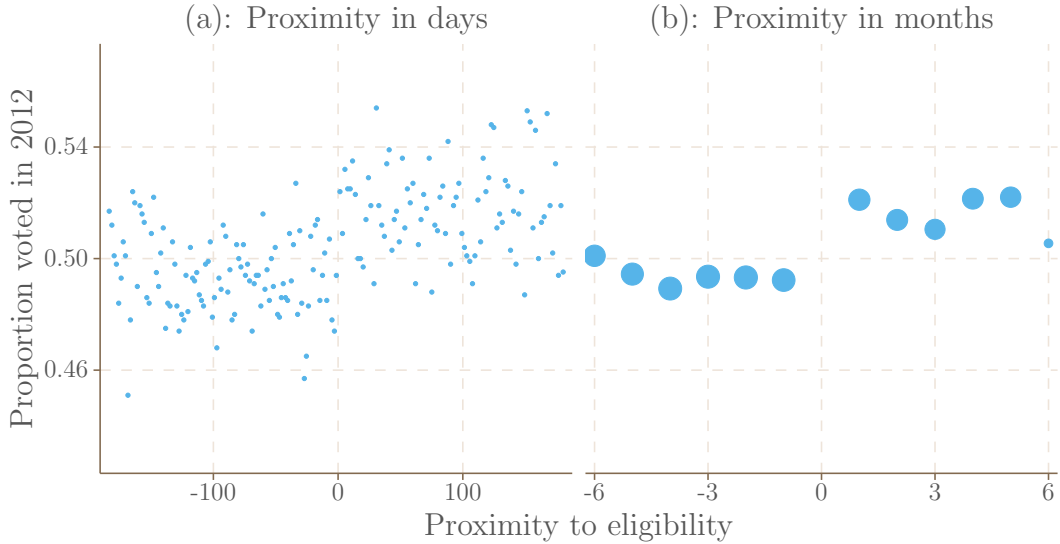}
\caption{Effect of proximity on voting.}\label{fig:reduced_form_plot}
\floatfoot{\emph{Notes:} In panel (a), proximity is measured in days, and each point
  corresponds to an average of 1,000 individuals. In panel (b), proximity is
  measured in months, and each point corresponds to an average across all
  individuals born in a given month.}
\end{figure}

We use five specifications to compute the fuzzy \ac{RD} estimator
in~\cref{eq:fuzzy_RD_estimator}, the sharp \ac{RD} estimators of the first stage and reduced form effects, and the associated confidence intervals.
For ease of comparison across specifications, all specifications use a uniform kernel and local linear regression ($q=1$).
The first specification follows \textcite{HoHi16}, and uses bandwidth set to $h=60$ days (or $h=2$ months), and the confidence
intervals to not account for the potential bias of the estimator. The second
specification differs only in that it uses a slightly larger bandwidth, $h=90$
days (or $h=3$ months).\footnote{These specifications can be interpreted as
  imposing a parametric linear functional form inside the estimation window.
  Alternatively, one can justify them by an ``undersmoothing'' argument: the
  specifications implicitly assume that the constants $M_{y}$ and $M_{t}$ are
  small enough so that the bias is negligible at these bandwidth choices.} The
third specification uses the \ac{RBC} method of \textcite{CaCaTi14}. For proximity
in days, we use the default ``MSE optimal'' bandwidth provided by their software
package; for proximity in months we use $h=3$.\footnote{The formal arguments
  justifying the \ac{RBC} method and the default bandwidth selector require the running
  variable to be continuous, which is not the case in either design.
  When proximity is measured in months, the discreteness causes implementation issues with the default ``MSE optimal'' bandwidth calculations.}

The last two methods implement the bias-aware approach. We use confidence
intervals given in \cref{eq:bias_aware_CI} for the first stage and reduced form
effects; for inference on the fuzzy \ac{RD} estimand, we use the
\textcite{NoRo21} construction. Implementing these methods requires a choice of
smoothness bounds for the first stage ($M_{t}$) and the reduced form ($M_{y}$). The results of \textcite{low97} and \textcite{ArKo18optimal} imply that picking $M_t$ and $M_y$ in a data-driven way without violating coverage requires further non-convex restrictions on the parameter spaces $\mathcal{F}_{RD}(M_{y})$ and $\mathcal{F}_{RD}(M_{t})$ for $g$ and $p$.\footnote{The problem of choosing the smoothness constants is essentially a non-parametric model selection problem. Echoing the difficulties with conducting valid post-model selection inference in parametric contexts \parencite[e.g.][]{LePo05}, \textcite{ArKo18optimal} show that bias-aware confidence intervals that assume the worst-case smoothness are in fact highly efficient at smooth functions. Thus, there is little scope for improvement by using data-driven choices of the smoothness constants.} A natural way of doing this is to relate the global smoothness of $g$ and $p$ to the local smoothness constants $M_y$ and $M_t$. We consider two ways of formalizing how the global and local smoothness relate. In particular, the fourth specification assumes, following the proposal in \textcite{ArKo20}, that $M_y$ is bounded by the smoothness of a global quartic approximation to $g$ on either side of the cutoff, as measured by the largest (in absolute value) second derivative of the fitted line; we impose an analogous assumption on $M_t$ and $p$. The fifth specification follows the suggestion in \textcite{ImWa19} to use a global quadratic regression instead, and, additionally, multiply the largest second
derivative of the fitted line by some moderate factor, taken here to be $2$.

Since there are many other reasonable ways of formalizing the idea that the local and global smoothness are related, we view these methods as merely \acp{ROT} for selecting the smoothness constants. To assess these rules, we use the visualization approach proposed in
\textcite{NoRo21}, described and implemented in \Cref{sec:empirical_extra}.
These visualizations suggest that the \textcite{ArKo20} \ac{ROT} is quite
conservative, and allows for $g$ and $p$ to be quite non-smooth. The second \ac{ROT} delivers more optimistic smoothness bounds that generate reasonably smooth conditional mean functions. To make the smoothness constants comparable across the specifications, we report the implied
smoothness constants after rescaling the running variable to have support
$[-1,1]$ (which amounts to multiplying the original smoothness constants by
$\max_{i}X_{i}^{2}$ and $\max_{i}(X_{i}^{*})^{2}$, respectively). Given a choice
of the smoothness constants, the bandwidth is selected so that the point
estimate defined in \cref{eq:local_polynomial} minimizes the worst-case (over
the chosen smoothness class) finite-sample MSE of the estimator.

\subsection{Results}\label{sec:results}

\begin{table}[t]
  \centering
  \caption{First stage estimates: effect of eligibility on preregistration.}\label{tab:fs}
  \small
\begin{tabular}{l c c c c c}
& \multicolumn{2}{c}{OLS} & \ac{RBC} & \multicolumn{2}{c}{Bias-aware inference}\\
\cmidrule(rl){2-3}\cmidrule(rl){4-4}\cmidrule(rl){5-6}
& (1) & (2) & (3) & (4) & (5)\\
\midrule
\multicolumn{3}{@{}l}{Panel A\@{}: Proximity in days}\\
Estimate & $0.384$ & $0.379$ & $0.393$ & $0.396$ & $0.381$\\
SE & $0.006$ & $0.005$ & $0.008$ & $0.009$ & $0.005$\\
95\% CI & $(0.373,0.396)$ & $(0.370,0.388)$ & $(0.378,0.409)$ & $(0.377,0.414)$ & $(0.371,0.391)$\\
Bandwidth & $60$ & $90$ & $43$ & $28$ & $86$\\
Eff. obs. & $63,220$ & $94,118$ & $43,538$ & $28,274$ & $89,776$\\
Rescaled $M_t$ & ${}$ & ${}$ & ${}$ & $1.015$ & $0.061$\\
\\[1ex]
\multicolumn{3}{@{}l}{Panel B\@{}: Proximity in months}\\
Estimate & $0.365$ & $0.363$ & $0.368$ & $0.365$ & $0.365$\\
SE & $0.009$ & $0.006$ & $0.017$ & $0.009$ & $0.009$\\
95\% CI & $(0.348,0.382)$ & $(0.351,0.375)$ & $(0.335,0.402)$ & $(0.303,0.428)$ & $(0.345,0.385)$\\
Bandwidth & $2$ & $3$ & $3$ & $2$ & $2$\\
Eff. obs. & $64,011$ & $94,662$ & $94,662$ & $64,011$ & $64,011$\\
Rescaled $M_t$ & ${}$ & ${}$ & ${}$ & $0.865$ & $0.092$\\
\\[-1ex]
\bottomrule
\end{tabular}
\floatfoot{\emph{Notes:} Column (1) uses local linear regression with bandwidth
  equal to 60 days (panel A) or 2 months (panel B), without any bias
  corrections. Column (2) is analogous, but uses bandwidth equal to 90 days or 3
  months. Column (3) uses the \ac{RBC} procedure, with the default ``MSE
  optimal'' bandwidth in panel A, and bandwidth equal to 3 months in panel B.
  Columns (4) and (5) report bias-aware confidence intervals, with bandwidth
  chosen to minimize the worst-case MSE\@. Column (4) uses the \ac{ROT} of
  \textcite{ArKo20} to choose the smoothness constant $M_{t}$, while column (5)
  uses the \ac{ROT} of \textcite{ImWa19}.\\Eff.\ obs refers to the number of
  observations inside the estimation window. The smoothness constants are
  reported after rescaling the running variable to have support $[-1,1]$.}
\end{table}

\Cref{tab:fs} presents the first stage estimates, the effect of the preregistration
eligibility on preregistration. The estimates are stable across the
specifications, in the range of 38--40\% when using proximity in days; the
estimates using proximity in months are slightly lower, in the range 36--37\%, but
still indicating a clear jump in the preregistration rate at the eligibility
threshold. This is consistent with our theory, discussed in
\Cref{remark:tau_comparison}, and reflects the difference between the parameters
$E[T(1)-T(0)\mid X^{*}=0]$ (the effect for those born on the cutoff date,
November 4) and $E[T(1)-T(0)\mid X=0]$ (the effect for those born in November,
the cutoff month). In particular, the latter estimand averages over individuals
born further away from the cutoff date, and \Cref{fig:first_stage_plot} suggests
that the preregistration probability is decreasing with the distance to the
cutoff. Since $X$ is discrete, the parameter $E[T(1)-T(0)\mid X=0]$ is not point
identified. The confidence intervals for the bias-aware specifications, which
account for this, are correspondingly wider than those in panel A, albeit they
still remain quite tight.

\begin{table}[t]
  \centering
  \caption{Reduced form estimates: effect of eligibility on voting.}\label{tab:rf}
  \small
\begin{tabular}{l c c c c c}
    & \multicolumn{2}{c}{OLS} & RBC & \multicolumn{2}{c}{Bias-aware inference}\\
    \cmidrule(rl){2-3}\cmidrule(rl){4-4}\cmidrule(rl){5-6}
    & (1) & (2) & (3) & (4) & (5)\\
\midrule
\multicolumn{3}{@{}l}{Panel A\@{}: Proximity in days}\\
Estimate & $0.028$ & $0.027$ & $0.032$ & $0.031$ & $0.028$\\
SE & $0.008$ & $0.007$ & $0.012$ & $0.012$ & $0.007$\\
95\% CI & $(0.012,0.044)$ & $(0.014,0.040)$ & $(0.008,0.056)$ & $(0.005,0.058)$ & $(0.013,0.044)$\\
Bandwidth & $60$ & $90$ & $38$ & $29$ & $83$\\
Eff. obs. & $63,220$ & $94,118$ & $39,195$ & $29,285$ & $86,881$\\
Rescaled $M_y$ & ${}$ & ${}$ & ${}$ & $1.401$ & $0.099$\\
\\[1ex]
\multicolumn{3}{@{}l}{Panel B\@{}: Proximity in months}\\
Estimate & $0.037$ & $0.034$ & $0.042$ & $0.037$ & $0.034$\\
SE & $0.013$ & $0.009$ & $0.025$ & $0.013$ & $0.009$\\
95\% CI & $(0.012,0.062)$ & $(0.017,0.051)$ & $(-0.007,0.090)$ & $(-0.085,0.159)$ & $(0.009,0.059)$\\
Bandwidth & $2$ & $3$ & $3$ & $2$ & $3$\\
Eff. obs. & $64,011$ & $94,662$ & $94,662$ & $64,011$ & $94,662$\\
Rescaled $M_y$ & ${}$ & ${}$ & ${}$ & $1.818$ & $0.121$\\
\\[-1ex]
\bottomrule
\end{tabular}
\floatfoot{\emph{Notes:} See \Cref{tab:fs}.}
\end{table}

\Cref{tab:rf} presents the reduced form estimates, the effect of preregistration
eligibility on voting. The point estimates are about 3\% for both designs and
stable across specifications. In line with the discussion in
\Cref{remark:discreteness}, the \acp{CI} are produced by the bias-aware
specifications are wider when using proximity in months, reflecting the loss of
point identification.

\begin{table}[t]
  \centering
  \caption{Fuzzy \ac{RD} estimates of the effect of preregistration on
    voting.}\label{tab:iv}
  \small
  \begin{tabular}{l c c c c c}
    & \multicolumn{2}{c}{OLS} & RBC & \multicolumn{2}{c}{Bias-aware inference}\\
    \cmidrule(rl){2-3}\cmidrule(rl){4-4}\cmidrule(rl){5-6}
    & (1) & (2) & (3) & (4) & (5)\\
    \midrule
\multicolumn{3}{@{}l}{Panel A\@{}: Proximity in days}\\
Estimate & $0.073$ & $0.072$ & $0.076$ & $0.080$ & $0.074$\\
SE & $0.021$ & $0.018$ & $0.032$ & $0.031$ & $0.018$\\
95\% CI & $(0.031,0.114)$ & $(0.037,0.106)$ & $(0.013,0.139)$ & $(0.012,0.143)$ & $(0.034,0.122)$\\
Bandwidth & $60$ & $90$ & $36$ & $29$ & $83$\\
Eff. obs. & $63,220$ & $94,118$ & $35,785$ & $29,285$ & $86,881$\\
Rescaled $M_y$ & ${}$ & ${}$ & ${}$ & $1.401$ & $0.099$\\
Rescaled $M_t$ & ${}$ & ${}$ & ${}$ & $1.015$ & $0.061$\\
\\[1ex]
\multicolumn{3}{@{}l}{Panel B\@{}: Proximity in months}\\
Estimate & $0.101$ & $0.094$ & $0.113$ & $0.101$ & $0.094$\\
SE & $0.035$ & $0.024$ & $0.068$ & $0.035$ & $0.024$\\
95\% CI & $(0.034,0.169)$ & $(0.047,0.141)$ & $(-0.020,0.246)$ & $(-0.268,0.505)$ & $(0.023,0.180)$\\
Bandwidth & $2$ & $3$ & $3$ & $2$ & $3$\\
Eff. obs. & $64,011$ & $94,662$ & $94,662$ & $64,011$ & $94,662$\\
Rescaled $M_y$ & ${}$ & ${}$ & ${}$ & $1.818$ & $0.121$\\
Rescaled $M_t$ & ${}$ & ${}$ & ${}$ & $0.865$ & $0.092$\\
\\[-1ex]
    \bottomrule
\end{tabular}
\floatfoot{\emph{Notes:} See \Cref{tab:fs}.}
\end{table}

\Cref{tab:iv} presents the fuzzy \ac{RD} estimates of the effect of
preregistration on voting. When eligibility is measured in months, the smaller
first stage estimates in panel B of \Cref{tab:fs} translate to larger estimates
of the effect of preregistration on voting, around 10\%, compared to 7--8\% when
eligibility is measured in days. When eligibility is measured in months, the
fuzzy \ac{RD} estimand, $\tau_{F}$, is the \ac{ATE} for compliers born in
November 1990, and thus averages over individuals born further away from the
cutoff than the estimand $\tau_{F}^{*}$ when eligibility is measured in days,
which corresponds to the \ac{ATE} for compliers born on November 4, 1990. If the
treatment effect for compliers born $x$ days from the eligibility threshold,
$\tau_{F} ^{\ast}(x)=E[Y(1)-Y(0)\mid \mathfrak{C}, X^{*}=x]$, is increasing
in $x$, then $\tau_{F}$ will be larger than $\tau_{F}^{*}$, which is consistent
with the results in \Cref{tab:iv}. However, the bias-aware confidence intervals
are fairly wide, and also consistent with $\tau_{F} ^{\ast}(x)$ being constant.

\section{Summary and conclusions}\label{sec:conclusion}

Measurement error is a common feature of \ac{RD} applications. We show that its presence does not have deleterious effects on the validity of existing inference methods, provided that one employs doughnut trimming to ensure that the observed running variable $X$ correctly classifies
the treatment assignment. Care needs to be taken when interpreting the estimand: it corresponds to the \ac{ATE} for units with the observed running variable $X$ equal to the cutoff,
rather than the usual parameter, the \ac{ATE} for units with the latent running
variable $X^{*}$ equal to the cutoff. We illustrate this point in an empirical
application.

\onehalfspacing
\phantomsection\addcontentsline{toc}{section}{References}

\printbibliography

\begin{appendices}
\crefalias{section}{appsec}
\crefalias{subsection}{appsubsec}
\section{Auxiliary results}
\subsection{Effect of measurement error on smoothness of conditional mean}\label{sec:auxiliary-results}

We now formalize the notion that measurement error smooths out non-linearities
in $g^{*}_{t}$. To this end, first we introduce some definitions. For a
real-valued function on a bounded set in $\mathbb{R}^{d}$, and a multi-index
$\alpha=(\alpha_{1}, \dotsc, \alpha_{d})$, let
$D^{\alpha}f=\partial^{\sum_{j=1}^{d}\alpha_{j}}f/\partial x_{1}^{\alpha_{1}}
\dotsb \partial x_{d}^{\alpha_{d}}$. For an integer $k$, let
$\norm{f}_{C^{k+1}}=\sum_{\abs{\alpha}\leq
  k}\sup_{x}\abs{D^{\alpha}f(x)}+\sum_{\abs{\alpha}=k}\sup_{x\neq
  y}\abs{f^{(k)}(x)-f^{(k)}(y)}/\norm{x-y}$ denote the Hölder norm, with the
convention that $\norm{f}_{C^{0}}=\sup_{x}\abs{f(x)}$, and that
$\norm{f}_{C^{k}}=\infty$ if $f$ is not $(k-1)$-times differentiable. We say
that $f$ has Hölder smoothness index $k$ if $\norm{f}_{C^{k}}<\infty$ \parencite[e.g.][Section 2.7.1]{vVWe96}.
This quantifies the ``smoothness'' of $f$ ($k$ is
also called the Hölder exponent; for simplicity we focus attention on exponents
that are integers). In other words, $f$ has smoothness $k$ if it is $k$ times
differentiable almost everywhere, with the derivatives bounded.

The next result shows that if (a) the conditional density of $f(e;x)$ of $e$ given $X=x$ is sufficiently smooth in the second argument, and (b) $g_{t}^{*}(x, e)$ is also sufficiently smooth in the second argument, then the smoothness of $g_{t}$ is given by the \emph{sum} of the smoothness indices of $x\mapsto g_{t}^{*}(x, e)$ and that of $e\mapsto f(e;x)$. This makes precise the notion that measurement error ``smooths out'' the non-linearities in $g_{t}^{*}$.

\begin{lemma}\label{lemma:smoothness}
  Suppose that $(X, e)$ has bounded support, and that the distribution of $e$
  given $X=x$ is continuous with bounded density $f(e; x)$ such that
  $\norm{\partial^{s} f/\partial x^{s}}_{C^{r}}<\infty$ for some non-negative
  integers $r, s$. Let $h(x, e)$ be a function such that
  $\sup_{e}\norm{\partial^{r} h(\cdot, e)/\partial e^{r}}_{C^{s}}<\infty$. Then
  $g(x):=E[h(X+e, e)\mid X=x]$ has smoothness $s+r$.
\end{lemma}
\begin{proof}
  Since the lower-order derivatives $D^{(0,k-1)}f(e;x)$ and
  $D^{(k-1,0)}h(x+e, e)$ exist and are Lipschitz continuous for $k\leq s$, by
  dominated convergence theorem, we can take a derivative under the integral
  sign using the Leibniz product rule, so that, for all $x$,
  \begin{equation*}
    g^{(s)}(x)=\int \sum_{k=0}^{s}\binom{s}{k} D^{(k,0)}h(x+e, e)D^{(0,s-k)}f(e; x)de.
 \end{equation*}
 Thus,
 \begin{multline*}
   g^{(r+s)}(x)=\sum_{k=0}^{s}\binom{s}{k}\frac{d^{r}}{dx^{r}}\int D^{(k,0)}h(y,
   y-x)D^{(0,s-k)}f(y-x; x)dy\\
   =\sum_{k=0}^{s}\sum_{u=0}^{r}\sum_{v=0}^{u}\binom{s}{k}\binom{r}{u}\binom{u}{v}(-1)^{r-u+v}\int
   D^{(k, r-u)}h(y, y-x)D^{(v, s-k+u-v)}f(y-x; x)dy,
 \end{multline*}
 where the first line follows by change of variables, and the second line by the dominated convergence theorem and Leibnitz product rule. Since $D^{(k, r-u)}h(y, y-x)$ and $D^{(v, s-k+u-v)}f(y-x; x)$ are bounded, it follows that $g^{(r+s)}$ is bounded.
\end{proof}

\subsection{Visualization of smoothness constants}\label{sec:empirical_extra}

Here we assess the smoothness constants suggested by the \acp{ROT} using the
visualization approach proposed in \textcite{NoRo21}. To explain the approach,
suppose that we are interested in a sharp \ac{RD} regression of an outcome
$\tilde{Y}_{i}$ on a running variable $\tilde{X}_{i}$, and make the assumption
that the conditional mean satisfies
$E[\tilde{Y}_{i}\mid \tilde{X}_{i}]\in\mathcal{F}_{RD}(M)$. To assess the
plausibility of the smoothness bound $M$, we regress $\tilde{Y}_{i}$ on a basis
function transformation $g(\tilde{X}_{i})$ of $\tilde{X}_{i}$, and the
interaction of $g(\tilde{X}_{i})$ with $\1{\tilde{X_{i}}\geq 0}$. To ensure that
the estimated regression function lies in $\mathcal{F}_{RD}(M)$, we minimize the
sum of squared residuals subject to the constraint that the second derivative of
the estimated regression function be no larger than $M$, and equal $M$ at the
cutoff. If the basis is sufficiently flexible, the estimated regression function
will tend to overfit the data, and therefore represent an extremal element of
$\mathcal{F}_{RD}(M)$. If the estimated regression function appears relatively
smooth, this thus is an indicator that the choice of $M$ is quite optimistic; if
we are clearly overfitting the data, it signals that the choice of $M$ is
conservative---it is unlikely that $E[\tilde{Y}_{i}\mid \tilde{X}_{i}]$ lies
outside $\mathcal{F}_{RD}(M)$.

\begin{figure}[t]
\centering
\input{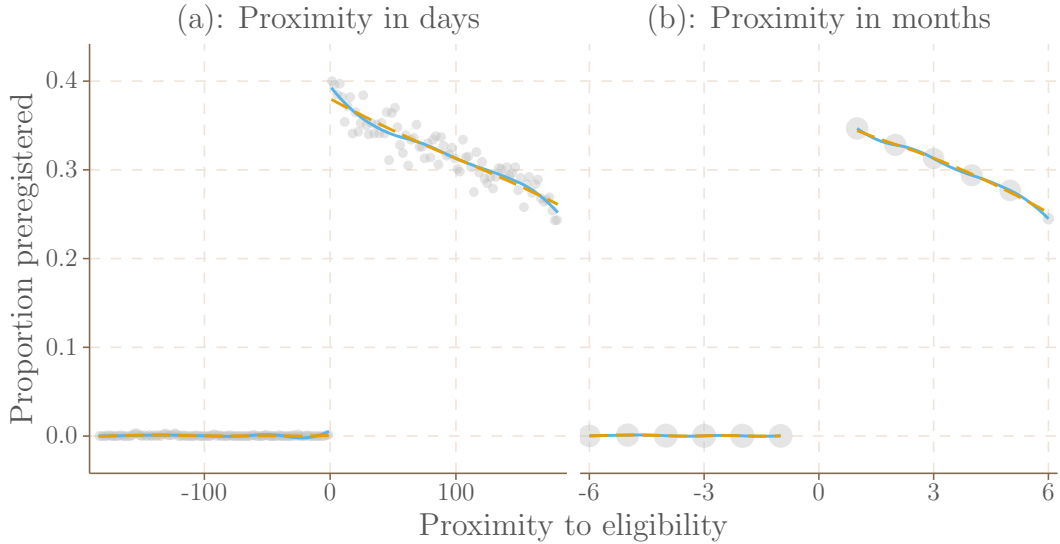}
\caption{Visualization of extreme conditional mean functions in the class
  $\mathcal{F}_{RD}(M_{t})$, for different choices of the first stage smoothness
  constant $M_{t}$.}\label{fig:a_fs}
\floatfoot{\emph{Notes:} Orange dotted line visualizes the choice of $M_{t}$
  based on the \ac{ROT} proposed by \textcite{ImWa19}. Blue solid line
  visualizes $M_{t}$ based on the \ac{ROT} proposed by \textcite{ArKo20}. Values
  of these constants are given in columns (4) and (5) of \Cref{tab:fs}. In panel
  (a), proximity is measured in days, and each point corresponds to an average
  of 1,000 individuals. In panel (b), proximity is measured in months, and each
  point corresponds to an average across all individuals born in a given month.}
\end{figure}

We use this method to assess the plausibility of the \acp{ROT} that we used to
calibrate the bounds $M_{y}$ and $M_{t}$ in the first stage and reduced form
sharp RD regressions. In the former, the outcome $\tilde{Y}_{i}$ corresponds to
the treatment variable $T_{i}$, while in the latter, $\tilde{Y}_{i}=Y_{i}$. To
implement the method, as a basis function, we use a quadratic spline with 21
knots on each side of the cutoff when proximity is measured in days, and with 6
knots when it is measured in months.

\begin{figure}[tp]
\input{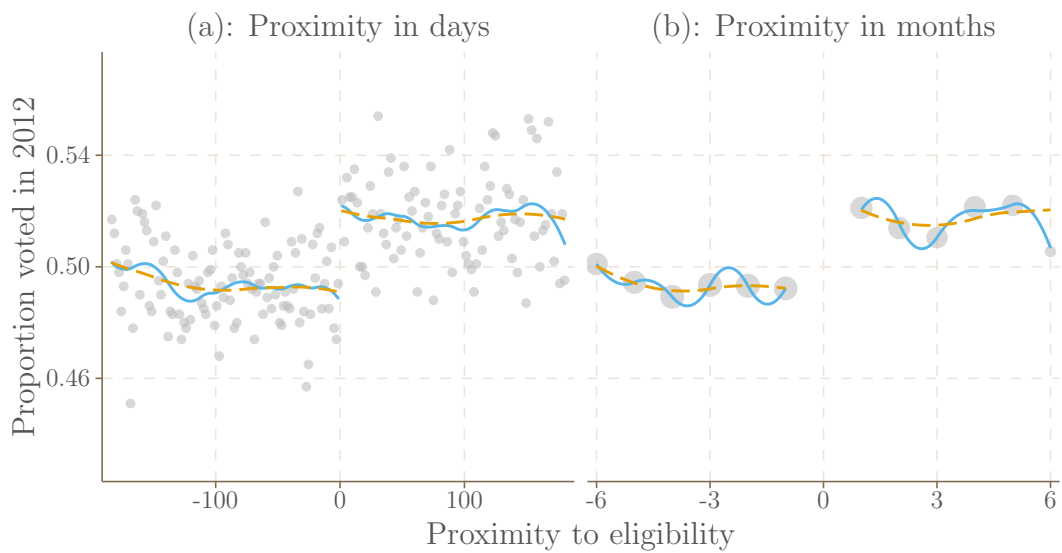}
\caption{Visualization of extreme conditional mean functions in the class
  $\mathcal{F}_{RD}(M_{y})$, for different choices of the reduced form
  smoothness constant $M_{y}$.}\label{fig:a_rf}
\floatfoot{\emph{Notes:} Orange dotted line visualizes the choice of $M_{y}$
  based on the \ac{ROT} proposed by \textcite{ImWa19}. Blue solid line
  visualizes $M_{y}$ based on the \ac{ROT} proposed by \textcite{ArKo20}. Values
  of these constants are given in columns (4) and (5) of \Cref{tab:rf}. In panel
  (a), proximity is measured in days, and each point corresponds to an average
  of 1,000 individuals. In panel (b), proximity is measured in months, and each
  point corresponds to an average across all individuals born in a given month.}
\end{figure}%

\Cref{fig:a_fs} visualizes the choices for the first stage smoothness constant
$M_{t}$, as estimated by the \acp{ROT} proposed by \textcite{ArKo20} and
\textcite{ImWa19}. Both choices \acp{ROT} appear reasonable based on the figure.
\Cref{fig:a_rf} gives an analogous visualization for the choices for the reduced
form smoothness constant $M_{y}$. Here the \ac{ROT} proposed by \textcite{ArKo20}
is quite conservative, while the \textcite{ImWa19} \ac{ROT} is more optimistic.

\subsection{Survey of empirical literature}\label{sec:survey}

To understand the prevalence of measurement error issues in the applied papers
that use \ac{RD} designs, we surveyed articles in 7 leading journals
(\emph{American Economic Journal: Applied Economics, %
  American Economic Journal: Economic Policy, %
  American Economic Review, %
  Quarterly Journal of Economics, %
  Journal of Political Economy}, %
\emph{Review of Economics and Statistics}, and \emph{Review of Economic
  Studies}) published between 2005 and 2020. We identified 139 papers that used
RD design, of which 32 papers (23\%) featured a running variable measured with
error. For each of these articles, we classified the type of measurement error,
whether Assumption~\ref{item:c1} holds (i) outright, (ii) after doughnut
trimming; or, under heaping error, (iii) after dropping the heaping points.
Finally, we noted the type of correction employed in the paper.

\Cref{tab:survey} reports the survey results. 27 out of the 32 papers (84\%)
feature rounding or grouping error. Out of these, 17 require no explicit
measurement error correction, provided we interpret the estimand correctly. 3
papers correctly employ doughnut trimming (or else include a dummy for the
cutoff month, which has the same effect). Several papers don't quite deal with
measurement error issues correctly, either by failing to create a doughnut hole,
or else by accounting for the discreteness of the rounded running variable by
clustering the standard errors by the running variable.\footnote{As argued in
\textcite{KoRo18}, clustering exacerbates, rather than solves, any inference
issues that the discreteness causes.} No paper discusses the implications of the
measurement error for the interpretation of the estimand.

Our survey indicates that even though grouping error is fairly common in
practice, there is a lack of clarity among applied researchers in how to account
for it. Since our survey focuses on the most selective journals, it is likely
that due to selection bias, the prevalence of measurement error is even higher
than our 23\% estimate. Likewise, while grouping error is very common, it likely
accounts for a lower share of measurement error types than 84\%, as we find in
our survey, since other types of measurement error are more difficult to deal
with.

\begin{table}[t]
  \centering
  \caption{Survey of empirical \ac{RD} papers with measurement error in the
    running variable.}\label{tab:survey}
  {\scriptsize
  \begin{tabular}{@{}p{0.28\textwidth}p{0.28\textwidth} cc p{0.22\textwidth}@{}}
    & &\ref{item:c1}& Error&  \\
    Paper& Running variable& holds& type& Correction\\
    \midrule
  Snyder and Evans (2006)& quarter of birth& Y& R& Not needed \\
    Black et al.\ (2007)& rounded risk score & Y& R& Not needed\\
    Card et al.\ (2007)& job tenure in months& Y& R& Not needed\\
  Anderson et al.\ (2012)& age in months & Y & R & Not needed\\
    Magruder (2012) & distance from magisterial district to
                      bargaining council regime border & Y & R & Not needed\\
    Clark and Royer (2013)& month-year of birth & Y& R& Not needed\\
    Borghans et al.\ (2014) & month-year of birth & Y & R & Not needed\\
  Anderson et al.\ (2014)& age in months& Y & R & Not needed \\ %
    Kumar (2018) & distance from county centroid to Texas border & Y & R & Not needed \\
  Avdic and Karimi (2018)& month-year of birth & Y& R& Not needed \\
    Wherry et al.\ (2018) & month-year of birth & Y & R & Not needed\\
    Malamud and Pop-Eleches (2010) & month-year of birth & Y & R & CRV\\
    Chetty et al.\ (2014) & income bins & Y & R & CRV\\
  Haggag and Paci (2014)& interval-reported taxi fare & Y& R& CRV \\
  Erten and Keskin (2018)& month-year of birth& Y& R& CRV\\
  Erten and Keskin (2020)& month-year of birth & Y& R& CRV \\
  Dieterle et al.\ (2020)& distance from county centroid to state border &
                                                                                 Y
                                 & R & \textcite{BaBrDi20} \\

  Lalive (2007) & month-year of birth & YD & R&~\\
  Stancanelli and van Soest (2012)& month-year of birth& YD& R&~\\

    Li et al.\ (2015)& age in years& D & R& Doughnut\\
Carpenter and Dobkin (2017)& age in months& D& R&Dummy for cutoff month \\
Kreiner et al.\ (2020)& age in months & D& R & Dummy for cutoff month, CRV\\
  Oreopoulos (2006)& year of birth& D & R & CRV \\
  Johnston and Mas (2018)& week of unemployment insurance claim & D & R&\\
  Davis (2008)& calendar month or year& D& R&~\\
  Lleras-Muney (2005)& year of birth, sometimes reported in multiples of 10 & YD/H& R/H&~\\
  Almond et al.\ (2011)& birth weight& H& H & Series of doughnuts \\
  Barreca et al.\ (2011)& birth weight& H& H & Series of doughnuts \\
  Almond et al.\ (2010) & birth weight &H& H &~\\

  Gonz\'{a}lez (2013) & month of birth / month of abortion / estimated month of conception
                       & Y/D/D & R/R/O & Discussed possible attenuation bias when using estimated month of conception\\

    Battistin et al. (2009)& age in years & N& O & Not needed (under maintained assumptions)\\
    Becker et al.\ (2013) & revised regional GDP per capita & N & O &\\
\bottomrule
  \end{tabular}}
  \floatfoot{\emph{Notes:} Whether Assumption~\ref{item:c1} holds is coded as:
    Y---Yes; D---Yes, after doughnut trimming; YD---Yes, possibly after doughnut
    trimming (there is insufficient information in the paper to determine
    whether doughnut trimming is needed); H---Yes after dropping heaping points;
    N---No. Error type is coded as: R---Rounding or grouping error; H---Heaping;
    O---Other. CRV\@: clustered standard errors by running variable following
    \textcite{LeCa08}.}
\end{table}

\end{appendices}

\end{document}